\documentclass[11pt,a4paper]{amsart}

\usepackage[margin=1in]{geometry}
\usepackage[T1]{fontenc}
\usepackage{amsmath,amssymb,amsthm,mathtools}
\usepackage{booktabs}
\usepackage[hidelinks]{hyperref}

\newtheorem{theorem}{Theorem}[section]
\newtheorem{lemma}[theorem]{Lemma}

\newtheorem{proposition}[theorem]{Proposition}
\theoremstyle{definition}
\newtheorem{definition}[theorem]{Definition}
\theoremstyle{remark}
\newtheorem{remark}[theorem]{Remark}

\newcommand{\E}{\mathbb{E}}
\newcommand{\Prob}{\mathbb{P}}
\newcommand{\Czero}{\mathcal C}
\newcommand{\Cs}{C_{\Sigma}}
\newcommand{\Csy}{C_{\mathrm{sym}}}
\newcommand{\one}{\mathbf{1}}
\newcommand{\cE}{\mathcal{E}}
\DeclareMathOperator{\dist}{d_H}

\title{A Continuous Projection Converse and a Weighted Construction 
for Uniquely Decodable Code Pairs}
\author{Lei Yu}
\thanks{L. Yu is with the School of Statistics and Data Science, LPMC, KLMDASR,
	and LEBPS, Nankai University, Tianjin 300071, China (e-mail: leiyu@nankai.edu.cn).
}

\begin{document}

\subjclass[2020]{94A40, 94B65, 05D05}
\keywords{Binary adder channel, zero-error capacity, uniquely decodable code pairs,
projection bounds, weighted code constructions}

\begin{abstract}
We study the maximum sum rate of uniquely decodable code pairs.  
A weighted refinement of a complement-gluing
construction yields an explicit code of length 672 and sum rate exceeding
$1.318639029203$. An iterated projection argument, combined with a
conditional entropy estimate and a justified continuous limit, gives an upper bound of $1.480063425539$. 
\end{abstract}

\maketitle

\section{Introduction}\label{sec:introduction}

The binary adder channel is a basic model of noiseless multiple-access
communication: two senders independently choose binary input words, and
the receiver observes their coordinatewise integer sum. Although the
channel itself introduces no randomness, different input pairs may
produce the same output. Zero-error communication therefore asks for two
codebooks whose pairwise sums are all distinct. Such codebooks are also
called uniquely decodable code pairs, and their study connects coding
theory with extremal set theory and additive combinatorics.

\subsection{Background and related work}

The distinction between vanishing error and zero error is substantial.
The usual Shannon capacity region is the pentagon\footnote{All information rates are measured by bits, and the base of  ``log''  in this paper is $2$.}
$0\le R_i\le1$, $R_1+R_2\le3/2$; independent fair inputs attain its
sum-rate constraint; see, for example, \cite{OS}. Requiring injectivity for every message pair leads
to a different problem. Lindstr\"om's early work~\cite{Lindstrom}
established basic constructions, including the sum rate
$\tfrac12\log 6$. Mattas and \"Osterg\aa rd~\cite{MO} subsequently
obtained a six-coordinate code with message product 240.
Balogh, Nguyen, \"Osterg\aa rd, White, and Wigal~\cite{BNOWW}
used complement gluing to obtain the rate $1.318446971988\ldots$.
The lower bound in this paper refines that construction by replacing
Hamming weight with a positive integer weighted score.

Converse arguments must use the Cartesian-product structure of the
message set, since counting ternary outputs alone only gives $\log3$.
Entropy bounds, restrictions to coordinates, and bounds on correlated
binary rectangles provide complementary ways of exploiting this
structure. Ordentlich and Shayevitz~\cite{OS} developed an outer bound
using multiset-union-free families and conditional entropy. Austrin,
Kaski, Koivisto, and Nederlof~\cite{AKKN} obtained a sharper explicit
bound for highly unbalanced code pairs. The rectangle estimates of
Ordentlich, Polyanskiy, and Shayevitz~\cite{OPS} and the subsequent
results of Yu, Anantharam, and Chen~\cite{YAC} further strengthen the
converse near a unit individual rate.

Zhang~\cite{Zhang} recently broke the classic $3/2$ bound  for sum-capacity. 
Precisely, he obtained an explicit sum-capacity upper
bound of approximately $1.4884$ by projecting a unique-sum pair to a
lower-dimensional unique-sum system and applying a relaxation of the
Ordentlich--Shayevitz condition. This provides a direct antecedent to
the projection approach developed here. Our converse iterates the
section-extraction step and passes to a continuous trajectory. An
important distinction in both approaches is that the projected object
is a family of code pairs with disjoint output sets, rather than a
single Cartesian-product code pair.

The corner regime also has its own constructions. Kasami, Lin, Wei,
and Yamamura~\cite{KLWY} obtained asymptotic rates $(1,1/4)$;
Wiman~\cite{Wiman} improved the second coordinate to $0.2563$.
These boundary constructions address a different part of the region
from the nearly balanced construction considered here.

\subsection{Main results and proof strategy}

Let $\Cs$ denote the supremum of achievable zero-error sum rates and
let $\Csy$ denote the supremum of rates achievable by both users
simultaneously. Precise definitions appear in Section~\ref{sec:model}.

Define  
\begin{align}
	J(q)&:=2h\!\left(\frac{1-\sqrt{1-2q}}{2}\right)-q,
\; 0\le q\le\frac12.\label{eq:J}
\end{align}
Here, $h$ is the binary entropy function. 

By extending   the projection  approach in \cite{Zhang} to a continuous version, we establish the following converse bound. 
\begin{theorem}[Continuous projection converse]\label{thm:continuous}
	Let $0<a<b<1$ with $a<1/2$, and let
	$q\in C^2([a,b])$ take values in $(0,1/2)$. Define
	\[
	U(s)=s+(1-s)J(q(s)).
	\]
	Suppose
	\begin{equation}\label{eq:potential-conditions}
		U(b)\ge(1-b)\log3,\qquad U'(s)\le\log q(s)
		\quad(a\le s\le b).
	\end{equation}
	Then
	\begin{equation}\label{eq:continuous-bound}
		\Cs\le h(a)-\frac12h(2a)+a+(1-a)J(q(a)).
	\end{equation}
\end{theorem}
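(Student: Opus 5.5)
We will prove the bound by combining a discrete projection (section-extraction) step, iterated many times, with a Riemann-sum passage to the continuous potential \(U\). We work with \emph{unique-sum systems}: families of code pairs \((A_i,B_i)\) in \(\{0,1\}^m\) whose output sets \(A_i+B_i\) are pairwise disjoint. A single uniquely decodable pair \((A,B)\) of length \(n\) with \(|A||B|=2^{nR}\) is such a system with one member.

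\textbf{First projection.} Fix a fraction \(a\) of the coordinates (\(an\) of them) and restrict to the sections on which the typical projected pair has a prescribed joint type. Counting the outputs on the projected coordinates in the manner of Zhang, the loss is controlled by \(h(a)-\tfrac12h(2a)+a\). This leaves a unique-sum system on the remaining \((1-a)n\) coordinates whose total "mass" \(\sum_i\log(|A_i||B_i|)\), normalized by \(n\), is at least \(R\) minus that quantity.

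\textbf{Local step and the potential.} For a system on \(m\) remaining coordinates we use a relaxed Ordentlich--Shayevitz conditional-entropy estimate. Let \(q\) be the fraction of collision-type coordinates, i.e.\ where inputs are \((0,1)\) or \((1,0)\) with positive probability. The per-coordinate mass is then bounded by \(J(q)\): the term \(2h\bigl((1-\sqrt{1-2q})/2\bigr)\) comes from optimizing the input marginals \(p\) under the constraint \(2p(1-p)=q\), and the \(-q\) is the entropy penalty forced by unique decodability. In the other case the estimate fails to close, and we peel off a further \(ds\)-fraction of coordinates. This step costs at most \(-\log q(s)\,ds\) in log-count, since the section lost is the one with prescribed collision pattern, of probability about \(q\). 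It returns a new unique-sum system on \((1-s-ds)n\) coordinates.

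Define the invariant \(\Phi(s)=(\text{mass at stage } s)/n\). The hypothesis \(U'(s)\le\log q(s)\) is exactly what guarantees that \(\Phi(s)\le U(s)\) propagates under each infinitesimal step, by induction over a partition \(a=s_0<\dots<s_K=b\). At the terminal stage \(s=b\), the trivial output-counting bound gives mass at most \((1-b)\log 3\). The condition \(U(b)\ge(1-b)\log 3\) therefore closes the induction at the top: at every stage either the local entropy bound \(s+(1-s)J(q(s))\) applies directly, or we advance, and at \(b\) counting suffices. Unwinding back to \(s=a\) yields \(R\le h(a)-\tfrac12h(2a)+U(a)\), which is the bound \eqref{eq:continuous-bound}.

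\textbf{Main obstacle.} The hard step is the justified continuous limit. Each discrete step carries type-class errors of order \(O(\log n/n)\), and the function \(q\) must be replaced by its values on the grid. We would use \(q\in C^2\) and \(q\in(0,1/2)\) on the compact interval \([a,b]\) to get uniform bounds on \(J'\), \(J''\) and \(\log q\). This makes the per-step discretization error \(O(\Delta^2)+o(1)\), with \(K=1/\Delta\) steps giving total error \(O(\Delta)\). We then let \(n\to\infty\) first and \(\Delta\to0\) second. A further point needs care: \(J\) is concave, which is what allows the per-coordinate entropy estimate to be averaged across heterogeneous coordinates. The step size \(ds\) must be chosen uniformly so that the extracted section has size at least \(2^{n\log q\,ds-o(n)}\) for every member of the family simultaneously. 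For that we would pigeonhole over members and keep the majority, at a negligible \(O(\log n)\) loss.
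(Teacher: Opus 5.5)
Your overall architecture matches the paper's proof: Zhang's initial extraction, then at each grid point $s_j$ either the entropy bound or a further extraction costing about $\varepsilon\log(1/q(s_j))$, terminal ternary counting, backward induction with an $O(\varepsilon^2)$ error per step, and the limit $n\to\infty$ taken before $K\to\infty$. However, the bookkeeping as written does not give the stated bound.

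\textbf{The tracked quantity and the $+a$.} The quantity to track is $\log N_I=\log\sum_u|A_u||B_u|$, the total number of pairs in the section. It is not $\sum_u\log(|A_u||B_u|)$. Only the former is at most $(1-b)n\log 3$ at the terminal stage, since the outputs on $I^c$ determine the pair. It is also the quantity the entropy estimate controls.

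The summand $s$ in $U(s)$ is the term $H(X_I)\le k$ in the chain rule $\log N_I=H(X_I)+H(Z\mid X_I)$, i.e.\ the entropy of the member index. So the initial loss is only $h(a)-\tfrac12h(2a)$, the normalized $\log\bigl(\binom n{k}/\binom{\lfloor n/2\rfloor}{k}\bigr)$. Obtaining it requires first orienting the code: replace $B$ by $\one-B$ if needed so that the mean distance is at least $n/2$. You omit this step. As written, you charge $+a$ to the first step and then also use $\Phi(a)\le U(a)$. Those two statements combine to $R\le h(a)-\tfrac12h(2a)+2a+(1-a)J(q(a))$, not the claimed bound.

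\textbf{The local steps.} These are also mis-specified.
\begin{itemize}
\item The dichotomy must be on the mean disagreement $q_I=\frac1m\E\dist(X_{I^c},Y_{I^c})$ under the uniform law on the section. If $q$ is the fraction of coordinates with positive disagreement probability, the extraction step fails: every coordinate may disagree with tiny positive probability.
\item The entropy estimate needs conditional independence of $X_{I^c}$ and $Y_{I^c}$ given $X_I$; this holds because the section is uniform on a disjoint union of product sets. It also needs concavity of $F$, not merely of $J$. Individual conditional disagreements $q_{i,u}$ may exceed $1/2$ even when their average is below $q(s)<1/2$.
\item Most seriously, the extraction cannot be made uniform over members, and it need not be. A member with almost no disagreement is nearly destroyed by every choice of $K$. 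There may also be $2^{k}$ members, so pigeonholing over them is not an $O(\log n)$ loss.
\end{itemize}

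The missing idea is to double count over a common $\ell$-set $K\subseteq I^c$:
\[
\sum_{|K|=\ell}N_{I\cup K}=N_I\,\E\binom W\ell\ge N_I\binom{\lfloor qm\rfloor}{\ell},
\]
using convexity and monotonicity of $w\mapsto\binom w\ell$ together with $\E W\ge qm$. This bounds only the total count $N_{I\cup K}$, which is all the backward induction uses.
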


By choosing proper $(a,b,q)$, we obtain the following numerical upper bound. We also obtain an improved numberical lower bound by constructing an explicit code. 

\begin{theorem}[Main result]\label{thm:main}
For the two-user binary adder channel without feedback,
\begin{equation}\label{eq:main-summary}
1.318639029203<\Cs<1.480063425539.
\end{equation}
Moreover, $\Csy=\Cs/2$, and hence
\[
0.6593195146015<\Csy<0.7400317127695.
\]
The lower bound is attained strictly by a specified length-672 code.
The upper bound applies to all blocklengths through the supremum
characterization of sum capacity.
\end{theorem}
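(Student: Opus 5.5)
The statement has three parts: a lower bound, an upper bound, and the identity $\Csy=\Cs/2$. I will handle them separately.

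\textbf{Lower bound.} Exhibit the length-672 code pair $(A,B)$ obtained from the weighted complement-gluing construction described later in the paper. Verify that it is uniquely decodable, meaning $a+b=a'+b'$ forces $(a,b)=(a',b')$. The construction reduces this to a check on the base components together with the injectivity of the weighted score used for gluing. Then compute $\frac{1}{672}\log(|A||B|)$ exactly as a rational multiple of a logarithm of an integer and certify that it exceeds $1.318639029203$. Since finite codes can be repeated by concatenation, every achievable rate is a lower bound for $\Cs$; this also makes the inequality strict. The only delicate point is the rigorous numerical certification, which I would do by explicit integer comparison of $|A||B|$ against $2^{672\cdot 1.318639029203}$ using interval arithmetic.

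\textbf{Upper bound.} Apply Theorem~\ref{thm:continuous} with specific parameters $a<b$ and a $C^2$ function $q$ with values in $(0,1/2)$. I would take $q$ as a spline, or as the solution of the ODE $U'(s)=\log q(s)$ started at $s=b$ with $U(b)=(1-b)\log 3$ and integrated backward to $s=a$; this choice makes both conditions in~\eqref{eq:potential-conditions} tight. Then optimize over $a$ and $b$ so that the right-hand side of~\eqref{eq:continuous-bound} falls below $1.480063425539$. For rigor, I would replace the exact ODE solution by an explicit piecewise-polynomial $q$ that satisfies $U'(s)\le\log q(s)$ with a small margin. Checking this inequality on $[a,b]$ is the \emph{main obstacle}: it must be verified with interval arithmetic over a fine subdivision, bounding $U'$ and $\log q$ on each subinterval via derivative bounds, because $J$ involves $h$ composed with a square root. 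Finally, evaluate~\eqref{eq:continuous-bound} with certified error bounds. The bound then holds for all blocklengths because $\Cs$ is defined as a supremum.

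\textbf{Symmetric rate.} The inequality $2\Csy\le\Cs$ is immediate. For the converse, take a code pair $(A,B)$ of length $n$ with rates $(R_1,R_2)$ and form the length-$2n$ pair $(A\times B,\,B\times A)$ by concatenation. Its sums determine $(a+b,\,b'+a')$, so it is uniquely decodable, and both rates equal $(R_1+R_2)/2$. Hence $\Csy\ge\Cs/2$. The numerical bounds on $\Csy$ then follow by halving those on $\Cs$.
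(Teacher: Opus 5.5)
Your lower-bound and symmetric-rate arguments follow the paper. The code is the weighted gluing of Lemma~\ref{lem:weighted-gluing}, with exact counts taken from the coefficients of $P_A(z)^{112}$ and $P_B(z)^{112}$. Your pair $(A\times B,B\times A)$ is the explicit concatenation behind the paper's appeal to convexity and coordinate symmetry of $\Czero$ (Proposition~\ref{prop:region}).

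There are two small slips. First, the weighted score is not injective. The gluing uses its additivity together with $S_{\mathbf{w}}(\one-\mathbf{x})=D-S_{\mathbf{w}}(\mathbf{x})$; these put every output of $A_-+\widetilde B$ at score at most $D-1$ and every output of $A_++\widetilde B$ at score at least $D$. Second, strictness of the lower bound comes from the certified enclosure of $R_*$, not from concatenation, which only gives $\Cs\ge R_*$.

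For the upper bound you use the same Theorem~\ref{thm:continuous} but certify the trajectory differently, and the paper's route is much lighter.

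\textbf{The paper's route.} Regard $s$ as a function of $q$. The balance equation $U'(s)=\log q(s)$ becomes the linear equation $\frac{d}{dq}\ln(1-s)=J'(q)/G(q)$ with $G=1-J-\log q$. It is therefore solved by quadrature: $1-s=(1-b)\exp\bigl(\int_{1/3}^{q}J'(v)/G(v)\,dv\bigr)$, which is exactly the map $\sigma$ of \eqref{eq:trajectory}. Setting $q=\sigma^{-1}$ gives both conditions of \eqref{eq:potential-conditions} with equality by a two-line differentiation. The only numerics left are enclosures of $b$, the single integral $\Psi(r_*)$, $a$, and $\mathcal B$.

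\textbf{Your route.} Your first option, the exact backward ODE solution, is this trajectory once $q(b)=1/3$, and it is already rigorous. Replacing it by a spline with a margin is what creates the continuum inequality check you call the main obstacle.

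That check is also harder than you indicate, because there is essentially no slack. At fixed $q(a)$, the derivative of $\ln(1-a)$ in the terminal value $q(b)$ is $J'(q(b))\bigl[\frac{1}{1+\log3-J(q(b))}-\frac{1}{G(q(b))}\bigr]$, which vanishes exactly at $q(b)=1/3$. So optimizing over $b$ lands on the paper's trajectory, and $r_*$ is already near-stationary. Meanwhile the stated constant $1.480063425539$ exceeds the paper's enclosure of $\mathcal B$ by only about $2\times10^{-13}$.

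A margin in $U'\le\log q$, or a $C^1$ approximation error, propagates at first order into $U(a)$. Both would therefore have to be of order $10^{-13}$, which needs high-precision Taylor-model arithmetic rather than simple derivative bounds on a grid. Your spline would also have to be globally $C^2$ to meet the hypothesis of Theorem~\ref{thm:continuous}, or you would need to observe that its proof only uses a Lipschitz $U'$.

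Your approach is more general, since it needs no closed form. Here, though, the quadrature reduces the certification to a few scalar enclosures.
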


The bounds  on the sum-capacity improve the  currently best upper bound    $1.4884$ due to Zhang  \cite{Zhang} and  the currently best lower bound  $1.318446971988\ldots$ due to Balogh, Nguyen, \"Osterg\aa rd, White, and Wigal~\cite{BNOWW}.  

The achievability argument retains two parts of a tensor-power codebook,
complements one part, and separates their output sets by an additive
weighted score. Exact generating functions count the retained words.
The converse follows a different route. A section with small mean
input disagreement has a strong conditional entropy bound; a section
with large disagreement contains a sufficiently large further section.
Iterating this alternative gives a finite recursion. Passing first to
large blocklength and then to a fine projection grid produces a
differential inequality that admits an explicit trajectory.

Sections~\ref{sec:entropy} and~\ref{sec:projection} establish the two
local estimates. Sections~\ref{sec:finite}--\ref{sec:explicit} prove
and evaluate the converse. Section~\ref{sec:achievability} gives the
construction. 

\section{Model, notation, and elementary bounds}\label{sec:model}

The channel is
\[
Z=X_1+X_2,\qquad X_1,X_2\in\{0,1\},\qquad Z\in\{0,1,2\},
\]
where addition is over the integers. All logarithms and entropies are to
base $2$, unless $\ln$ is written explicitly. The encoders do not have
feedback.

A length-$n$ zero-error code is a pair of nonempty sets
$A,B\subseteq\{0,1\}^n$ such that $(\mathbf{a},\mathbf{b})\mapsto \mathbf{a}+\mathbf{b}$ is injective on
$A\times B$. Equivalently,
\begin{equation}\label{eq:diff}
(A-A)\cap(B-B)=\{\mathbf{0}\}.
\end{equation}
Indeed, a collision $\mathbf{a}+\mathbf{b}=\mathbf{a}'+\mathbf{b'}$ is equivalent to
$\mathbf{a}-\mathbf{a}'=\mathbf{b}'-\mathbf{b}$; difference sets are symmetric about the origin.

\begin{definition}[Zero-error capacity region]
For a length-$n$ code pair $(A,B)$, write
$r_1=n^{-1}\log|A|$ and $r_2=n^{-1}\log|B|$. Define
\begin{equation}\label{eq:region-definition}
\Czero:=\overline{\bigcup_{n\ge1}\ \bigcup_{(A,B)\text{ zero-error}}
[0,r_1]\times[0,r_2]}.
\end{equation}
The sum and symmetric capacities are, respectively,
\[
\Cs:=\max_{(R_1,R_2)\in\Czero}(R_1+R_2),\qquad
\Csy:=\max\{R:(R,R)\in\Czero\}.
\]
\end{definition}

\begin{proposition}[Basic properties]\label{prop:region}
The region $\Czero$ is compact, convex, downward closed, and invariant
under exchanging its coordinates. It contains $(1,0)$ and $(0,1)$
and is a subset of $[0,1]^2$.
\end{proposition}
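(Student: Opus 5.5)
The plan is to verify each property directly from the definition \eqref{eq:region-definition}. Set
\[
\mathcal U:=\bigcup_{n}\bigcup_{(A,B)}[0,r_1]\times[0,r_2],
\]
so that $\Czero=\overline{\mathcal U}$.

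\emph{Containment, compactness, corner points, downward closedness.} Since $A,B\subseteq\{0,1\}^n$, we have $r_i\le 1$. Hence every rectangle lies in $[0,1]^2$, and so does the closure. Being closed and bounded, $\Czero$ is compact. Taking $n=1$, $A=\{0,1\}$ and $B=\{0\}$ gives an injective sum map and $(r_1,r_2)=(1,0)$. Symmetrically we obtain $(0,1)$. For downward closedness, $\mathcal U$ is a union of sets of the form $[0,r_1]\times[0,r_2]$, each of which is downward closed in $[0,\infty)^2$. Hence $\mathcal U$ is downward closed. If $x_k\to x$ with $x_k\in\mathcal U$ and $0\le y\le x$ coordinatewise, then $\min(y,x_k)\in\mathcal U$ and $\min(y,x_k)\to y$, so the closure is also downward closed.

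\emph{Symmetry.} If $(A,B)$ is zero-error, so is $(B,A)$, because $\mathbf a+\mathbf b=\mathbf b+\mathbf a$. Swapping the two sets swaps $r_1$ and $r_2$. Thus $\mathcal U$ is invariant under the coordinate swap, and so is its closure.

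\emph{Convexity.} This is the main step. The approach is time sharing via concatenation. Take zero-error pairs $(A,B)$ of length $n$ and $(A',B')$ of length $n'$. Then $(A\times A',\,B\times B')$ is a zero-error pair of length $n+n'$, because the sum decodes blockwise. Its rates are
\[
\frac{n r_1+n' r_1'}{n+n'},\qquad \frac{n r_2+n' r_2'}{n+n'}.
\]
Repeating $(A,B)$ $k$ times and $(A',B')$ $k'$ times, via $k$-fold Cartesian powers, produces the rate pair $\lambda(r_1,r_2)+(1-\lambda)(r_1',r_2')$ with
\[
\lambda=\frac{kn}{kn+k'n'}.
\]
Such $\lambda$ range over a dense subset of $[0,1]$. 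Therefore every convex combination of two corner points of rectangles in $\mathcal U$ lies in $\Czero$. A convex combination of points inside two rectangles is dominated coordinatewise by the same combination of their corners, so by downward closedness it also lies in $\Czero$. Passing to limits, using density in $\lambda$ and closedness, shows that $\Czero$ contains all convex combinations of its own points.

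The only mildly delicate part is this limiting bookkeeping: approximating an arbitrary $\lambda$ by rational ones, and approximating points of the closure by points of $\mathcal U$. It is routine, since the rate map is continuous in $\lambda$ and $\Czero$ is closed.
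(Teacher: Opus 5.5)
Your proof is correct and follows the same route as the paper: $r_i\le1$ gives containment and compactness, a trivial code gives the corner points, swapping codebooks gives symmetry, and concatenating repeated codes (time sharing with dense length weights) followed by closure gives convexity. You also spell out the limit arguments the paper leaves implicit, namely downward closedness of $\overline{\mathcal U}$ and the passage to limits in $\lambda$ and in the points.
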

\begin{proof}
The individual bounds follow from $|A|,|B|\le2^n$; compactness and
downward closure follow from the definition. Exchanging codebooks
exchanges rates. The pair $(\{0,1\}^n,\{0^n\})$ gives $(1,0)$.
Concatenating codes of lengths $n$ and $m$ gives the length-weighted
average of their rate pairs. Repeating the two codes makes these
weights dense in $[0,1]$, so taking the closure proves convexity.
\end{proof}

Define
\[
M(n):=\max\bigl\{|A||B|: A,B\subseteq\{0,1\}^n
\text{ form a zero-error code}\bigr\}.
\]
Concatenation gives $M(n+m)\ge M(n)M(m)$, while $M(n)\le 3^n$.
Consequently, the zero-error sum capacity satisfies
\begin{equation}\label{eq:capacity}
\Cs=\lim_{n\to\infty}\frac{1}{n}\log M(n)
=\sup_{n\ge1}\frac{1}{n}\log M(n).
\end{equation}

\begin{proposition}[Classical elementary bounds]\label{prop:classical}
One has
\[
\frac12\log6\le\Cs\le\frac32.
\]
\end{proposition}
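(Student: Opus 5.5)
We prove the two inequalities separately. By the supremum characterization \eqref{eq:capacity}, it suffices to exhibit one code for the lower bound, and to bound $\frac1n\log M(n)$ uniformly in $n$ for the upper bound.

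\emph{Lower bound.} Take $n=2$, $A=\{00,11\}$, $B=\{00,01,10\}$. The six sums are
\[
00,\ 01,\ 10,\ 11,\ 12,\ 21,
\]
and these are pairwise distinct. Hence $M(2)\ge 6$, and \eqref{eq:capacity} gives $\Cs\ge\frac12\log 6$. Equivalently, we can check \eqref{eq:diff} directly. Here $A-A=\{00,\pm11\}$. Every element of $B-B$ has coordinates whose signs do not match those of $\pm 11$, apart from the zero vector: the nonzero differences are $\pm 01$, $\pm 10$, $\pm(1,-1)$. So the intersection is $\{\mathbf 0\}$.

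\emph{Upper bound.} Let $(A,B)$ be a length-$n$ zero-error code. Let $X_1$ be uniform on $A$ and $X_2$ uniform on $B$, independent of each other, and set $Z=X_1+X_2$. Since the map $(\mathbf a,\mathbf b)\mapsto\mathbf a+\mathbf b$ is injective, we get
\[
\log|A||B|=H(X_1,X_2)=H(Z).
\]
Split this entropy using the independence of the two inputs:
\[
H(Z)=H(X_1,X_2)=H(X_1)+H(X_2\mid X_1)=H(X_1)+H(X_2)\le 2n .
\]
This alone is too weak, so we use a second, coordinatewise estimate:
\[
H(Z)\le \sum_{i=1}^n H(Z_i).
\]
Let $p_i=\Prob(X_{1,i}=1)$ and $r_i=\Prob(X_{2,i}=1)$. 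Because the coordinates of $X_1$ and $X_2$ are independent, $Z_i$ takes the values $0,1,2$ with probabilities
\[
(1-p_i)(1-r_i),\qquad p_i(1-r_i)+r_i(1-p_i),\qquad p_ir_i .
\]
The key inequality is
\[
H(Z_i)\le H(X_{1,i}+X_{2,i})\le \tfrac32 ,
\]
valid for independent Bernoulli inputs. To prove it, condition on whether $Z_i=1$ and write
\[
H(Z_i)=h(\Prob(Z_i=1))+\Prob(Z_i\ne1)\,h\!\left(\frac{p_ir_i}{\Prob(Z_i\ne1)}\right).
\]
Then maximize over $(p_i,r_i)$. By symmetry and concavity, the maximum is attained at $p_i=r_i=\tfrac12$, where $Z_i$ has law $(\tfrac14,\tfrac12,\tfrac14)$ and entropy exactly $\tfrac32$. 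Summing over coordinates gives $\log|A||B|\le \tfrac32 n$, and taking the supremum in \eqref{eq:capacity} gives $\Cs\le\tfrac32$.

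The main obstacle is the single-letter maximization $\max_{p,r}H(\mathrm{Bern}(p)+\mathrm{Bern}(r))=\tfrac32$. To handle it, I would first fix $s=p+r$. The mass at $1$ is
\[
s-2pr,
\]
which is minimized when the product $pr$ is maximized, and that happens at $p=r$. So I would argue that, for fixed $s$, equalizing $p$ and $r$ increases the entropy: it moves mass from the middle value toward the ends in a balanced way, and this can be checked by differentiating along the line $p+r=s$. Once $p=r$, the remaining one-variable function of $p$ is a concave entropy expression, symmetric under $p\mapsto1-p$, so it is maximized at $p=\tfrac12$.

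If this calculus step becomes messy, there is an alternative route: use the chain rule $H(Z_i)\le H(X_{1,i})+H(Z_i\mid X_{1,i})$ together with the bound on the binary conditional part. This is the standard Shannon sum-rate bound cited from \cite{OS}.
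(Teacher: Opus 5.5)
Your proof is correct and has the same skeleton as the paper's: the same length-two code, then independent uniform inputs, injectivity, and subadditivity. Where you differ is the single-letter bound $H(X_{1,i}+X_{2,i})\le\frac32$.

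The paper gets this bound from Lemma~\ref{lem:bits}. It parametrizes by the disagreement probability $q$ and uses $p_0p_2\le q^2/4$ to reduce to two i.i.d.\ bits, which gives $J(q)$. It then uses $J\le J(1/2)=\frac32$ (monotonicity of $J$, cited from Wyner) together with $L$ decreasing on $[\frac12,1]$. You instead fix $s=p_i+r_i$, push to $p_i=r_i$, and maximize along the diagonal directly. Your route is self-contained and needs no outside citation. The paper's parametrization gives the sharper function $F(q)$, which its later converse actually needs.

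Three points in your write-up should be tightened.

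First, ``by symmetry and concavity'' is not justified as stated. The law of $Z_i$ is bilinear, not affine, in $(p_i,r_i)$, so concavity of entropy does not carry over automatically. Rely on your second argument instead.

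Second, within that argument, the ``balanced mass transfer'' heuristic is not a reason by itself; moving mass from the middle value to the ends can lower entropy in general. Replace it with the actual computation. With $t=p_ir_i$ along $p_i+r_i=s$,
\[
\frac{dH(Z_i)}{dt}=\log\frac{\Prob(Z_i=1)^2}{\Prob(Z_i=0)\Prob(Z_i=2)}\ge\log4 .
\]
This holds because $\Prob(Z_i=0)\Prob(Z_i=2)=p_i(1-r_i)\cdot r_i(1-p_i)\le\frac14\Prob(Z_i=1)^2$, which is exactly the inequality \eqref{eq:product}. On the diagonal, $H(Z_i)=2h(p)-2p(1-p)$. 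Its second derivative $4-\frac{2}{p(1-p)\ln2}$ is negative, so the maximum $\frac32$ is attained at $p=\frac12$.

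Third, your fallback does not work. The chain rule gives only $H(Z_i)\le H(X_{1,i})+H(Z_i\mid X_{1,i})=H(X_{1,i})+H(X_{2,i})\le2$. The independence constraint has to enter through something like the product inequality above.
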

\begin{proof}
The length-two pair $A=\{00,11\}$, $B=\{00,01,10\}$ has six
distinct sums, proving the lower bound. For the upper bound, take
independent uniform $X\in A$, $Y\in B$. Injectivity and entropy
subadditivity give
\[
\log(|A||B|)=H(X+Y)\le\sum_{i=1}^n H(X_i+Y_i)\le\frac32n.
\]
The last inequality follows from Lemma~\ref{lem:bits} below:
$J\le J(1/2)=3/2$, and $L$ is decreasing on $[1/2,1]$.
Now use \eqref{eq:capacity}.
\end{proof}

\subsection{Sections of a code pair}
The converse studies pairs whose sum is fixed to one on a chosen
set of coordinates. Unlike a restriction of each codebook separately,
this operation can introduce dependence between the two inputs.

For $I\subseteq[n]:=\{1,\ldots,n\}$, define the section
\begin{equation}\label{eq:section}
\cE_I:=\{(a,b)\in A\times B:a_i+b_i=1\text{ for all }i\in I\},
\qquad N_I:=|\cE_I|.
\end{equation}
When $N_I>0$ and $|I|<n$, sample $(X,Y)$ uniformly from $\cE_I$ and set
\begin{equation}\label{eq:qI}
q_I:=\frac{1}{n-|I|}\E\dist(X_{I^c},Y_{I^c}).
\end{equation}
This conditioned distribution need not make $X$ and $Y$ independent.
The conditional independence needed below will be established explicitly.

\section{A conditional entropy estimate}\label{sec:entropy}

The following estimate records the dependence on the disagreement
probability. It is related to the conditional entropy approach in
\cite{OS}; the proof and its application to sections are included here.

Let
\[
h(p):=-p\log p-(1-p)\log(1-p),\qquad 0\le p\le1,
\]
with $0\log0:=0$. 
Recall 
\begin{align}
	J(q)&:=2h\!\left(\frac{1-\sqrt{1-2q}}{2}\right)-q.
\end{align}
Define
\begin{align}
L(q)&:=h(q)+1-q,\; 0\le q\le1,\label{eq:L}\\
F(q)&:=\begin{cases}
J(q),&0\le q\le\frac12,\\
L(q),&\frac12\le q\le1.
\end{cases}\label{eq:F}
\end{align}

\begin{lemma}[Entropy of the sum of independent bits]\label{lem:bits}
The function $F$ is concave on $[0,1]$, and $J$ is increasing on
$[0,1/2]$. For independent binary random variables $V,W$,
\begin{equation}\label{eq:bit-entropy}
H(V+W)\le F\bigl(\Prob(V\ne W)\bigr).
\end{equation}
\end{lemma}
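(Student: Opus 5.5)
The plan is to reduce \eqref{eq:bit-entropy} to an explicit optimization over the two marginal parameters, and then verify the analytic claims about $J$ and $F$ by a change of variables $q\leftrightarrow p$.

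\emph{Step 1: an exact formula.} Write $\alpha=\Prob(V=1)$, $\beta=\Prob(W=1)$ and $q=\Prob(V\ne W)=\alpha+\beta-2\alpha\beta$. The event $\{V\ne W\}$ is a function of $V+W$, so the chain rule gives
\[
H(V+W)=h(q)+(1-q)\,h(r),\qquad r:=\frac{\alpha\beta}{1-q},
\]
since $V+W=1$ on $\{V\ne W\}$. Here $r$ is the conditional probability of the output $2$ given $V=W$. Bounding $h(r)\le 1$ immediately gives $H(V+W)\le L(q)$ for every $q$, which settles the case $q\ge 1/2$.

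\emph{Step 2: the refined bound for $q\le 1/2$.} Fix $q$ and put $s=\alpha+\beta$, so that $\alpha\beta=(s-q)/2$. Real $\alpha,\beta$ with this sum and product exist iff $s^2\ge 2(s-q)$.
\begin{itemize}
\item Replacing $(V,W)$ by $(1-V,1-W)$ preserves $q$ and $H(V+W)$ and sends $s\mapsto 2-s$. Hence we may assume $s\le 1$, which is equivalent to $r\le 1/2$.
\item On $r\le 1/2$, $h(r)$ increases with $r$, and $r=(s-q)/(2(1-q))$ increases with $s$. So we maximize $s$ subject to $s\le 1$ and $s^2-2s+2q\ge 0$.
\item For $q<1/2$ the feasible set is $s\le 1-\sqrt{1-2q}$, with equality exactly when $\alpha=\beta=p:=(1-\sqrt{1-2q})/2$.
\item At $\alpha=\beta=p$ the output law is $(p^2,\,2p(1-p),\,(1-p)^2)$. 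A direct computation using $q=2p(1-p)$ gives its entropy as $2h(p)-q=J(q)$.
\end{itemize}
This proves \eqref{eq:bit-entropy} for $q\le 1/2$. As a check, $J(1/2)=3/2=L(1/2)$, so $F$ is continuous.

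\emph{Step 3: monotonicity and concavity.} Parametrize by $e:=\sqrt{1-2q}/2=\tfrac12-p\in(0,1/2]$. Using $dp/dq=1/(2\sqrt{1-2q})$ and $h'(p)=\log\frac{1-p}{p}$, we get
\[
J'(q)=\frac{\log\frac{1-p}{p}}{1-2p}-1=\frac{2}{\ln 2}\cdot\frac{\operatorname{artanh}(2e)}{2e}-1.
\]
\begin{itemize}
\item Since $\operatorname{artanh}(x)/x\ge 1$, we get $J'\ge 2/\ln2-1>0$, so $J$ is increasing.
\item Since $\operatorname{artanh}(x)/x$ is increasing in $x$, and $e$ decreases as $q$ increases, $J'$ is decreasing in $q$. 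Hence $J$ is concave on $[0,1/2]$.
\item $L$ is concave on $[1/2,1]$ as $h$ plus a linear function.
\item Concavity of the glued function $F$ then needs only the one-sided derivatives at $1/2$ to be in the right order. We have $J'(1/2^-)=2/\ln2-1$, while $L'(1/2^+)=\log\frac{1-q}{q}\big|_{q=1/2}-1=-1$. So the derivative jumps down and $F$ is concave.
\end{itemize}

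The main obstacle I expect is Step 2, specifically locating the constrained maximizer correctly. This needs the complement symmetry to reduce to $r\le1/2$, so that monotonicity of $h$ applies. It also needs the discriminant constraint to identify the boundary $s=1-\sqrt{1-2q}$, and hence $\alpha=\beta$, as the maximizer. The derivative computation at the endpoint $q=1/2$, where $dp/dq$ blows up, is the other delicate point; the artanh form handles it as a limit.
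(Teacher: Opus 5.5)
Your proof is correct and follows the paper's argument. Your discriminant condition $s^2\ge 2(s-q)$ is the paper's constraint $p_0p_2\le q^2/4$ in other coordinates, since both reduce to $(\alpha-\beta)^2\ge 0$. Your complement-symmetry reduction to $r\le 1/2$ does the job of the paper's ``balancing'' monotonicity, and the gluing of one-sided derivatives at $1/2$ is identical. The only substantive difference is that you prove $J$ is increasing and concave directly from $J'(q)=\frac{2}{\ln 2}\cdot\frac{\operatorname{artanh}(2e)}{2e}-1$, whereas the paper cites Wyner's Proposition~3.3, so your version is self-contained.
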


\begin{proof}
Write $q=\Prob(V\ne W)$ and $p_i=\Prob(V+W=i)$ for $i\in\{0,2\}$.
Independence implies
\begin{equation}\label{eq:product}
p_0p_2
=\Prob(V=0,W=1)\Prob(V=1,W=0)
\le\frac{q^2}{4}.
\end{equation}
Also, $p_0+p_2=1-q$.

Suppose first that $q\le1/2$. For fixed $p_0+p_2$, the entropy of
$(p_0,q,p_2)$ increases as $p_0$ and $p_2$ become more balanced,
or equivalently as $p_0p_2$ increases. Under \eqref{eq:product}, its maximum
is therefore attained at $p_0p_2=q^2/4$. This distribution is realized by
the sum of two independent $\operatorname{Bernoulli}(t)$ variables $V',W’$, where
\[
t=\frac{1\pm \sqrt{1-2q}}{2}.
\]
By the identity $H(V',W’)=H(V'+W’) + H(V’,W'|V’+W')$,  the entropy of the sum is $H(V’+W')=2h(t)-q=J(q)$.
When $q\ge1/2$, maximizing entropy over all ternary distributions with
mass $q$ at one gives the upper bound $L(q)$.

By Proposition 3.3 of \cite{Wyner}, $J$
is increasing and concave. The function $L$ is concave, and
\[
J(1/2)=L(1/2)=\frac32,\qquad
J'(1/2-)=\frac{2}{\ln2}-1>-1=L'(1/2+).
\]
The derivative therefore jumps downward at the joining point, proving
concavity of $F$. 
\end{proof}

\begin{lemma}[Entropy bound for a section]\label{lem:section-entropy}
Let $I\subseteq[n]$ satisfy $N_I>0$, and write $k=|I|<n$ and $m=n-k$.
Then
\begin{equation}\label{eq:section-entropy}
\log N_I\le k+mF(q_I).
\end{equation}
In particular, for $0\le q_*\le1/2$,
\begin{equation}\label{eq:low-disagreement}
q_I\le q_*\quad\Longrightarrow\quad
\log N_I\le k+mJ(q_*).
\end{equation}
\end{lemma}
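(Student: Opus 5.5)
The plan is to condition on the inputs in the coordinates of $I$. This exposes a product structure, after which Lemma~\ref{lem:bits} can be applied coordinatewise together with two applications of Jensen's inequality. First I record two structural facts about $\cE_I$. For $u\in\{0,1\}^I$ let
\[
A_u=\{a\in A:a_I=u\},\qquad B_u=\{b\in B:b_I=\one-u\}.
\]
Since $a_i+b_i=1$ on $I$ forces $b_I=\one-a_I$, the section decomposes as the disjoint union $\cE_I=\bigcup_u A_u\times B_u$. Hence, under the uniform law on $\cE_I$, conditionally on $X_I=u$ the pair $(X,Y)$ is uniform on the product $A_u\times B_u$. In particular $X$ and $Y$ are conditionally independent given $X_I$. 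This is the conditional independence announced after \eqref{eq:qI}.

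Next I write $\log N_I=H(X,Y)$ and use the chain rule
\[
H(X,Y)=H(X_I)+H(X,Y\mid X_I).
\]
The first term is at most $k$. For the second term, given $X_I=u$, the full sum $X+Y$ determines $(X,Y)$ by injectivity of $(a,b)\mapsto a+b$ on $A\times B$. Moreover, $X_I+Y_I=\one$ is known, so $(X,Y)$ is a function of $(X_I,Z_{I^c})$, where $Z=X+Y$. Therefore
\[
H(X,Y\mid X_I)=H(Z_{I^c}\mid X_I)\le\sum_{i\in I^c}H(X_i+Y_i\mid X_I).
\]
For each $i\in I^c$ and each $u$, the bits $X_i,Y_i$ are independent under the conditional law. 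Lemma~\ref{lem:bits} then gives
\[
H(X_i+Y_i\mid X_I=u)\le F\bigl(\Prob(X_i\ne Y_i\mid X_I=u)\bigr).
\]

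Then I average. Concavity of $F$ (Lemma~\ref{lem:bits}) and Jensen's inequality over $u$ give
\[
H(X_i+Y_i\mid X_I)\le F\bigl(\Prob(X_i\ne Y_i)\bigr).
\]
A second application of Jensen's inequality, averaging over the $m$ coordinates of $I^c$, gives
\[
\sum_{i\in I^c}F\bigl(\Prob(X_i\ne Y_i)\bigr)\le mF\Bigl(\tfrac1m\sum_{i\in I^c}\Prob(X_i\ne Y_i)\Bigr).
\]
By linearity of expectation the argument on the right equals $\tfrac1m\E\dist(X_{I^c},Y_{I^c})=q_I$. This proves \eqref{eq:section-entropy}.

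For \eqref{eq:low-disagreement}, if $q_I\le q_*\le1/2$ then $F(q_I)=J(q_I)\le J(q_*)$, because $J$ is increasing on $[0,1/2]$ by Lemma~\ref{lem:bits}.

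The only delicate step is the first one. The section law is not a product law, so Lemma~\ref{lem:bits} cannot be applied to $(X_i,Y_i)$ directly. The decomposition into the products $A_u\times B_u$ is what restores independence, at the cost of the $H(X_I)\le k$ term. Correspondingly, the bound for the $I^c$ part must be taken conditionally and only afterwards averaged using concavity of $F$.
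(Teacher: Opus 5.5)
Your proposal is correct and follows essentially the same route as the paper. Both condition on $X_I$ to get a product law $A_u\times B_u$, use injectivity to write $\log N_I=H(X_I)+H(Z_{I^c}\mid X_I)$, apply subadditivity and Lemma~\ref{lem:bits} coordinatewise, and finish with concavity of $F$. The only difference is cosmetic: you apply Jensen in two stages (first over $u$, then over $i$), while the paper averages over both in a single step.
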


\begin{proof}
Sample $(X,Y)$ uniformly from $\cE_I$, and put
\[
U=X_I,\qquad Z=X_{I^c}+Y_{I^c}.
\]
Given $U=u$ with positive probability, $Y_I=\one-u$, and the conditional
support of $(X_{I^c},Y_{I^c})$ is the Cartesian product
\[
\{a_{I^c}:a\in A,\ a_I=u\}
\times
\{b_{I^c}:b\in B,\ b_I=\one-u\}.
\]
The conditional distribution is uniform on this product. Hence
\begin{equation}\label{eq:conditional-independence}
X_{I^c}\perp Y_{I^c}\mid U.
\end{equation}

The remaining output $Z$ determines $(X,Y)$: the omitted output
coordinates all equal one, and the full output uniquely determines the
input pair. Consequently,
\begin{equation}\label{eq:entropy-identities}
H(Z)=\log N_I,\qquad H(U\mid Z)=0.
\end{equation}
For $i\in I^c$, write $q_{i,u}=\Prob(X_i\ne Y_i\mid U=u)$.
Using \eqref{eq:conditional-independence}, Lemma~\ref{lem:bits}, and
concavity of $F$ yields
\begin{align*}
\log N_I
&=H(U)+H(Z\mid U)\\
&\le k+\sum_{i\in I^c}H(X_i+Y_i\mid U)\\
&\le k+\sum_{i\in I^c}\E F(q_{i,U})\\
&\le k+mF\!\left(\frac1m\sum_{i\in I^c}\E q_{i,U}\right)
=k+mF(q_I).
\end{align*}
If $q_I\le q_*\le1/2$, the last assertion follows because $F=J$ on
$[0,1/2]$ and $J$ is increasing there.
\end{proof}

\section{Extracting a further section}\label{sec:projection}
The next estimate extends the double-counting argument of
\cite[Lemma 3.5]{Zhang} to an arbitrary existing section and a general
mean-disagreement threshold. This form allows repeated extraction.

\begin{lemma}[Projection estimate]\label{lem:projection}
Suppose $N_I>0$, $m=n-|I|>0$, and $q_I\ge q>0$. For every integer
$\ell$ with $1\le\ell\le\lfloor qm\rfloor$, there exists
$K\subseteq I^c$ with $|K|=\ell$ such that
\begin{equation}\label{eq:projection}
N_{I\cup K}\ge
N_I\frac{\binom{\lfloor qm\rfloor}{\ell}}{\binom m\ell}.
\end{equation}
\end{lemma}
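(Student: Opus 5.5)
The proof is a double-counting (averaging) argument over all $\ell$-subsets $K\subseteq I^c$. The key observation is that for $(a,b)\in\cE_I$ and $K\subseteq I^c$, the pair lies in $\cE_{I\cup K}$ exactly when $a_i\ne b_i$ for every $i\in K$. This is because $a_i+b_i=1$ is equivalent to $a_i\ne b_i$ for binary entries. So
\[
\sum_{\substack{K\subseteq I^c\\|K|=\ell}} N_{I\cup K}
=\sum_{(a,b)\in\cE_I}\binom{d(a,b)}{\ell},
\qquad d(a,b):=\dist(a_{I^c},b_{I^c}),
\]
since a fixed pair is counted once for each $\ell$-subset of its disagreement set within $I^c$.

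Next I bound the right side from below using the hypothesis $q_I\ge q$, which says the average of $d(a,b)$ over $\cE_I$ is at least $qm$. The function $x\mapsto\binom{x}{\ell}$ is convex and nondecreasing on the integers. I extend it to the convex function $g(x)=x(x-1)\cdots(x-\ell+1)/\ell!$ for $x\ge \ell-1$ and $g(x)=0$ for $x\le\ell-1$; this extension agrees with $\binom{x}{\ell}$ at every integer $x\ge0$. Jensen's inequality then gives
\[
\frac1{N_I}\sum_{(a,b)\in\cE_I}\binom{d(a,b)}{\ell}\ \ge\ g\bigl(\E d\bigr)\ \ge\ g(qm)\ \ge\ \binom{\lfloor qm\rfloor}{\ell},
\]
using monotonicity of $g$ and $qm\ge\lfloor qm\rfloor$.

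Finally, there are $\binom m\ell$ choices of $K$, so some $K$ satisfies $N_{I\cup K}\ge N_I\binom{\lfloor qm\rfloor}{\ell}/\binom m\ell$, which is \eqref{eq:projection}. The condition $1\le\ell\le\lfloor qm\rfloor$ guarantees that the right side is positive, so the chosen section is nonempty; that matters for later iteration but is not needed for the inequality itself.

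The only delicate step is the convexity of the piecewise extension $g$. On $[\ell-1,\infty)$ the polynomial has all its roots at or below $\ell-1$, so all its derivatives are nonnegative there. At the junction with the zero branch the function and its left slope match, with $g(\ell-1)=0$ and right derivative $\ge0$. An alternative that avoids $g$ entirely is to apply discrete convexity of $\binom{x}{\ell}$ directly, interpolating the mean between $\lfloor qm\rfloor$ and $\lceil qm\rceil$. Either route suffices.
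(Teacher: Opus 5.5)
Your proposal is correct and matches the paper's proof: you use the same double count $\sum_{|K|=\ell}N_{I\cup K}=N_I\,\E\binom{W}{\ell}$, then Jensen with $\E W\ge qm$, then averaging over the $\binom m\ell$ choices of $K$. The only difference is cosmetic: you extend $w\mapsto\binom w\ell$ by the truncated polynomial, whereas the paper uses the piecewise-linear interpolation (your stated discrete-convexity alternative); both extensions are convex, nondecreasing and agree with $\binom w\ell$ on the integers, so either works.
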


\begin{proof}
	Under the uniform distribution on $\cE_I$, let
	$W=\dist(X_{I^c},Y_{I^c})$. For each fixed pair
	$(a,b)\in\cE_I$, define its remaining disagreement set by
	\[
	D_I(a,b):=\{j\in I^c:a_j\ne b_j\}.
	\]
	Since the coordinates are binary, $a_j\ne b_j$ is equivalent to
	$a_j+b_j=1$. Consequently, for $K\subseteq I^c$,
	\[
	(a,b)\in\cE_{I\cup K}
	\quad\Longleftrightarrow\quad K\subseteq D_I(a,b).
	\]
	We now count the incidences $((a,b),K)$ with $(a,b)\in\cE_I$,
	$K\subseteq D_I(a,b)$, and $|K|=\ell$ in two ways. For each fixed
	$K$, there are $N_{I\cup K}$ such pairs. For each fixed $(a,b)$,
	there are $\binom{|D_I(a,b)|}{\ell}$ such subsets $K$ (zero if
	$|D_I(a,b)|<\ell$). Therefore
	\begin{equation}\label{eq:counting}
		\begin{aligned}
			\sum_{\substack{K\subseteq I^c\\|K|=\ell}}N_{I\cup K}
			&=\sum_{(a,b)\in\cE_I}\binom{|D_I(a,b)|}{\ell} =N_I\E\binom W\ell.
		\end{aligned}
	\end{equation}
	The last equality uses that $(X,Y)$ is uniform on the $N_I$ pairs
	in $\cE_I$ and that $W=|D_I(X,Y)|$.
	On the nonnegative integers, the sequence $w\mapsto\binom w\ell$
	uses the convention $\binom w\ell=0$ for $w<\ell$.
	Its first differences are $\binom w{\ell-1}$, which are nonnegative
	and nondecreasing. Its piecewise-linear interpolation is therefore
	nondecreasing and convex. Jensen's inequality and $\E W\ge qm$ give
	\[
	\E\binom W\ell\ge\binom{\lfloor qm\rfloor}{\ell}.
	\]
	Averaging \eqref{eq:counting} over all $\binom m\ell$ choices of $K$
	proves the result.
\end{proof}
%

The following initial extraction is the combination of Zhang's
orientation lemma and section-counting lemma
\cite[Lemmas 3.1 and 3.5]{Zhang}. His set $\Delta_I$ is our $\cE_I$;
complementing the first or the second codebook is equivalent by user
symmetry. We include the proof for completeness.

\begin{lemma}[Initial orientation and projection \cite{Zhang}]\label{lem:initial}
For every zero-error code $A,B$ and every integer
$1\le k\le\lfloor n/2\rfloor$, either the original code or the code
$A,\one-B$ has a section $I$ of size $k$ satisfying
\begin{equation}\label{eq:initial-section}
N_I\ge |A||B|\frac{\binom{\lfloor n/2\rfloor}{k}}{\binom nk}.
\end{equation}
\end{lemma}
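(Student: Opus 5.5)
The plan is to reduce the statement to Lemma~\ref{lem:projection} applied with $I=\emptyset$ and threshold $q=1/2$. The only thing to supply is an orientation step guaranteeing mean disagreement at least $1/2$. Take $I=\emptyset$, so $\cE_\emptyset=A\times B$ and $N_\emptyset=|A||B|$. Under the uniform distribution on $A\times B$, the inputs $X$ and $Y$ are independent. The quantity $q_\emptyset=\frac1n\E\dist(X,Y)$ is the average over coordinates of $\Prob(X_i\ne Y_i)$.

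\textbf{Orientation.} Replacing $B$ by $\one-B$ does two things.
\begin{itemize}
\item It preserves the zero-error property. A collision $a+(\one-b)=a'+(\one-b')$ is equivalent to $a-b=a'-b'$, that is, $a-a'=b-b'$. By \eqref{eq:diff}, with difference sets symmetric about the origin, this forces $a=a'$ and $b=b'$.
\item It complements each disagreement indicator, since $\dist(X,\one-Y)=n-\dist(X,Y)$.
\end{itemize}
Hence the mean disagreements of the two codes sum to $n$. Writing $q_\emptyset$ and $q'_\emptyset$ for the two values, we get $q_\emptyset+q'_\emptyset=1$, so one of them is at least $1/2$. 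Both codes have the same product $|A||B|$, so we may fix the oriented code with $q_\emptyset\ge 1/2$.

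\textbf{Projection.} Apply Lemma~\ref{lem:projection} with $I=\emptyset$, $m=n$, and $q=1/2$. Its hypotheses hold: $N_\emptyset>0$ because the codebooks are nonempty, and $\lfloor qm\rfloor=\lfloor n/2\rfloor\ge k\ge 1$. The lemma then yields $K\subseteq[n]$ with $|K|=k$ and
\[
N_K\ge |A||B|\,\frac{\binom{\lfloor n/2\rfloor}{k}}{\binom nk}.
\]
This is \eqref{eq:initial-section}.

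\textbf{Main obstacle.} The one point needing care is the orientation step. The section is defined by requiring sum one on $I$, and this must be computed in the complemented code rather than the original. Since $a_i+(1-b_i)=1$ exactly when $a_i=b_i$, the section of $(A,\one-B)$ corresponds to agreement sets of the original pair. With that kept straight, everything else is inherited from Lemma~\ref{lem:projection}, including the Jensen step for $w\mapsto\binom w\ell$.
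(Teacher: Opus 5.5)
Your proof is correct and takes the same route as the paper. You complement $B$ to orient the code so that the mean disagreement satisfies $q_\varnothing\ge 1/2$, using $\dist(X,\one-Y)=n-\dist(X,Y)$ and the fact that complementing preserves zero error, and then apply Lemma~\ref{lem:projection} with $I=\varnothing$, $q=1/2$, $\ell=k$; your explicit checks of the collision argument and of the lemma's hypotheses simply spell out what the paper states in one line.
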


\begin{proof}
Replacing $B$ by $\one-B$ preserves the zero-error property. 
For independent uniform inputs on
$A$ and $B$, this replacement changes the mean Hamming distance from
$\E\dist(X,Y)$ to $n-\E\dist(X,Y)$. We may therefore orient the code so
that the mean distance is at least $n/2$. Apply
Lemma~\ref{lem:projection} to $I=\varnothing$, $q=1/2$, and $\ell=k$.
\end{proof}

\section{Finite-blocklength and asymptotic converses}\label{sec:finite}

\begin{theorem}[Finite-blocklength iterated converse]\label{thm:finite}
Fix integers
\[
1\le k_0<k_1<\cdots<k_K<n,\qquad k_0\le\lfloor n/2\rfloor.
\]
Write $m_j=n-k_j$ and $\ell_j=k_{j+1}-k_j$ for $0\le j<K$.
Choose $q_j\in(0,1/2)$ such that
$\ell_j\le\lfloor q_jm_j\rfloor$.
Starting with $B_K=m_K\log3$, define backwards
\begin{equation}\label{eq:finite-recursion}
B_j=\max\left\{
 k_j+m_jJ(q_j),\quad
 B_{j+1}+\log\frac{\binom{m_j}{\ell_j}}
 {\binom{\lfloor q_jm_j\rfloor}{\ell_j}}
\right\}.
\end{equation}
Then
\begin{equation}\label{eq:finite-bound}
\log M(n)\le
\log\frac{\binom n{k_0}}{\binom{\lfloor n/2\rfloor}{k_0}}+B_0.
\end{equation}
\end{theorem}

\begin{proof}
We first show, by backward induction on $j$, that every nonempty
section $\cE_I$ with $|I|=k_j$ satisfies $\log N_I\le B_j$.
At stage $K$, the remaining ternary output uniquely determines the pair,
so $N_I\le3^{m_K}$.

For the induction step, if $q_I\le q_j$,
Lemma~\ref{lem:section-entropy} gives
$\log N_I\le k_j+m_jJ(q_j)$.
If $q_I>q_j$, Lemma~\ref{lem:projection} produces a nonempty section
$\cE_{I\cup K'}$ with $|I\cup K'|=k_{j+1}$ with
\[
\log N_I\le
\log N_{I\cup K'}+
\log\frac{\binom{m_j}{\ell_j}}
{\binom{\lfloor q_jm_j\rfloor}{\ell_j}}
\le B_{j+1}+
\log\frac{\binom{m_j}{\ell_j}}
{\binom{\lfloor q_jm_j\rfloor}{\ell_j}}.
\]
Both cases are bounded by \eqref{eq:finite-recursion}.
Finally, orient the original code and extract an initial section as in
Lemma~\ref{lem:initial}. Applying $\log N_I\le B_0$ to this section
proves \eqref{eq:finite-bound}.
\end{proof}

For $0<a<q<1$, define
\begin{equation}\label{eq:d}
d(a,q):=h(a)-q\,h(a/q).
\end{equation}

\begin{theorem}[Asymptotic iterated converse]\label{thm:asymptotic}
Fix an integer $K\ge1$ and real numbers
\[
0<s_0<s_1<\cdots<s_K<1,\qquad s_0<\frac12.
\]
For $0\le j<K$, choose $q_j$ satisfying
\begin{equation}\label{eq:alpha-domain}
\alpha_j:=\frac{s_{j+1}-s_j}{1-s_j}<q_j<\frac12.
\end{equation}
Set $T_K=(1-s_K)\log3$ and define backwards
\begin{equation}\label{eq:Trec}
T_j=\max\left\{
 s_j+(1-s_j)J(q_j),\quad
 T_{j+1}+(1-s_j)d(\alpha_j,q_j)
\right\}.
\end{equation}
Then
\begin{equation}\label{eq:asymptotic-bound}
\Cs\le h(s_0)-\frac12h(2s_0)+T_0.
\end{equation}
\end{theorem}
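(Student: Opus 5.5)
The plan is to derive the asymptotic bound from Theorem~\ref{thm:finite} by choosing, for each large $n$, integers $k_j=\lfloor s_jn\rfloor$ (with $k_0\ge1$, so that $k_0\le\lfloor n/2\rfloor$ because $s_0<1/2$). We keep the same $q_j$ and divide by $n$. Set $m_j=n-k_j$ and $\ell_j=k_{j+1}-k_j$. Since $\alpha_j<q_j$ strictly, we have $\ell_j/m_j\to\alpha_j<q_j$. Hence $\ell_j\le\lfloor q_jm_j\rfloor$ for all sufficiently large $n$, and the hypotheses of Theorem~\ref{thm:finite} hold eventually.

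Next we evaluate the normalized quantities with the standard estimate $\frac1n\log\binom{\lfloor xn\rfloor}{\lfloor yn\rfloor}\to x\,h(y/x)$ for $0<y\le x$, which follows from Stirling or from $\binom{N}{K}=2^{Nh(K/N)+O(\log N)}$. The initial term satisfies
\[
\frac1n\log\frac{\binom n{k_0}}{\binom{\lfloor n/2\rfloor}{k_0}}\to h(s_0)-\tfrac12h(2s_0).
\]
For the recursion term we use $m_j/n\to1-s_j$ and $\lfloor q_jm_j\rfloor/n\to q_j(1-s_j)$. This gives
\[
\frac1n\log\frac{\binom{m_j}{\ell_j}}{\binom{\lfloor q_jm_j\rfloor}{\ell_j}}\to(1-s_j)\bigl[h(\alpha_j)-q_jh(\alpha_j/q_j)\bigr]=(1-s_j)d(\alpha_j,q_j).
\]
Similarly, $k_j/n+(m_j/n)J(q_j)\to s_j+(1-s_j)J(q_j)$ and $m_K\log3/n\to(1-s_K)\log3$.

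Write $B_j^{(n)}$ for the finite quantities of Theorem~\ref{thm:finite}. Because the recursion \eqref{eq:finite-recursion} is built from sums and maxima of finitely many terms, and the maximum of two functions is continuous, backward induction on $j$ gives $B_j^{(n)}/n\to T_j$ for every $j$. Combining \eqref{eq:finite-bound} with \eqref{eq:capacity} gives
\[
\Cs=\lim_n\frac1n\log M(n)\le h(s_0)-\tfrac12h(2s_0)+T_0.
\]

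The main care point is the uniform $O(\log n)/n$ error in the binomial asymptotics when floors are involved. Since all the ratios converge to interior values, the continuity of $h$ and the bound $\log\binom NK=Nh(K/N)+O(\log N)$ suffice. The edge case $\alpha_j/q_j$ near $1$ is excluded by the strict inequality $\alpha_j<q_j$.
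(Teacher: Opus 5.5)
Your proposal is correct and follows essentially the same route as the paper. The paper also takes $k_j=\lfloor s_jn\rfloor$ in Theorem~\ref{thm:finite}, uses the strict inequalities to meet the integer conditions for large $n$, and applies $\log\binom{N}{r}=Nh(r/N)+O(\log(N+1))$. It then concludes with backward induction $B_j/n\to T_j$ via continuity of the maximum and passes to the limit using \eqref{eq:capacity}.
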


\begin{proof}
In Theorem~\ref{thm:finite}, take $k_j=\lfloor s_jn\rfloor$.
For all sufficiently large $n$, all integer-domain conditions hold
because the inequalities in \eqref{eq:alpha-domain} are strict.
The elementary binomial asymptotic
\[
\log\binom Nr=Nh(r/N)+O(\log(N+1))
\]
gives
\begin{align*}
\frac1n\log\frac{\binom n{k_0}}
{\binom{\lfloor n/2\rfloor}{k_0}}
&\longrightarrow h(s_0)-\frac12h(2s_0),\\
\frac1n\log\frac{\binom{m_j}{\ell_j}}
{\binom{\lfloor q_jm_j\rfloor}{\ell_j}}
&\longrightarrow (1-s_j)d(\alpha_j,q_j).
\end{align*}
Backward induction in \eqref{eq:finite-recursion}, using continuity of
maximum, shows that $B_j/n\to T_j$. Divide \eqref{eq:finite-bound}
by $n$ and pass to the limit. The number of stages $K$ is fixed during
this blocklength limit. 
\end{proof}

\section{The continuous projection converse}\label{sec:continuous}

The finite-stage recursion has a useful continuous limit. Its proof uses
Theorem~\ref{thm:asymptotic} with each number of stages fixed first, and
therefore does not require estimates uniform in both blocklength and the
number of stages. We obtain Theorem \ref{thm:continuous}, which is repeated here. 

\begin{theorem}[Continuous projection converse]\label{thm:continuous2}
Let $0<a<b<1$ with $a<1/2$, and let
$q\in C^2([a,b])$ take values in $(0,1/2)$. Define
\[
U(s)=s+(1-s)J(q(s)).
\]
Suppose
\begin{equation}\label{eq:potential-conditions}
U(b)\ge(1-b)\log3,\qquad U'(s)\le\log q(s)
\quad(a\le s\le b).
\end{equation}
Then
\begin{equation}\label{eq:continuous-bound}
\Cs\le h(a)-\frac12h(2a)+a+(1-a)J(q(a)).
\end{equation}
\end{theorem}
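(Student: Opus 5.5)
The plan is to derive the statement from Theorem~\ref{thm:asymptotic} by taking a uniform grid on $[a,b]$ and sending its mesh to zero, with $U$ serving as a discrete potential that dominates the backward recursion \eqref{eq:Trec}. Fix $K$, put $\Delta=(b-a)/K$, $s_j=a+j\Delta$ and $q_j=q(s_j)$. Then $s_0=a<1/2$ and $s_K=b$. By compactness and continuity, $q([a,b])\subseteq[q_{\min},q_{\max}]$ for some $0<q_{\min}\le q_{\max}<1/2$. Since $\alpha_j=\Delta/(1-s_j)\le\Delta/(1-b)$, condition \eqref{eq:alpha-domain} holds once $K$ is large. Theorem~\ref{thm:asymptotic} then gives $\Cs\le h(a)-\frac12h(2a)+T_0^{(K)}$. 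It therefore suffices to show $T_0^{(K)}\le U(a)+O(\Delta)$, with the implied constant independent of $K$, and to let $K\to\infty$.

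The first key step is a uniform estimate for the projection cost $d$. Differentiating \eqref{eq:d} in $\alpha$ gives $\partial_\alpha d(\alpha,q)=\log\frac{1-\alpha}{\alpha}-\log\frac{q-\alpha}{\alpha}=\log\frac{1-\alpha}{q-\alpha}$, and $d(0,q)=0$. Hence
\[
d(\alpha,q)=\int_0^\alpha\log\frac{1-t}{q-t}\,dt\le \alpha\log\frac1q+C_1\alpha^2
\]
uniformly for $q\in[q_{\min},q_{\max}]$ and $0\le\alpha\le q_{\min}/2$, since the integrand is smooth with bounded $t$-derivative on this range. Because $(1-s_j)\alpha_j=\Delta$, the second branch of \eqref{eq:Trec} becomes $T_{j+1}+\Delta(-\log q_j)+O(\Delta^2)$.

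The second key step is a backward induction proving $T_j\le U(s_j)+E_j$ with $E_K=0$ and $E_j=E_{j+1}+C\Delta^2$. The base case is the hypothesis $T_K=(1-b)\log3\le U(b)$. The first branch of \eqref{eq:Trec} equals $s_j+(1-s_j)J(q_j)=U(s_j)$ exactly. For the second branch, $J$ is smooth on a neighbourhood of $[q_{\min},q_{\max}]\subset(0,1/2)$ and $q\in C^2$, so $U\in C^2([a,b])$. Taylor's theorem at $s_j$ then gives
\[
U(s_{j+1})\le U(s_j)+\Delta U'(s_j)+C_2\Delta^2\le U(s_j)+\Delta\log q_j+C_2\Delta^2,
\]
using $U'(s_j)\le\log q(s_j)$. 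Adding $-\Delta\log q_j+C_1\Delta^2/(1-b)^2$ cancels the logarithmic terms and gives the second branch at most $U(s_j)+E_{j+1}+C\Delta^2$. Summing, $E_0=KC\Delta^2=C(b-a)\Delta\to0$, and letting $K\to\infty$ yields \eqref{eq:continuous-bound}.

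The main obstacle is the uniformity of the two second-order error terms. The constant for $d$ must not depend on $j$ or $K$, because $d$ has a singular derivative in $\alpha$ at $0$ on its own; this is why the cancellation inside $\partial_\alpha d$ must be used rather than expanding $h(\alpha)$ and $q\,h(\alpha/q)$ separately. The Taylor remainder for $U$ is uniform because $q$ stays in a compact subinterval of $(0,1/2)$. Only one-sided inequalities are needed throughout, which is exactly the direction supplied by \eqref{eq:potential-conditions}.
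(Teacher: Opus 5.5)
Your proposal is correct and follows essentially the same route as the paper. Both proofs take a uniform grid on $[a,b]$, obtain the uniform estimate $d(\alpha,q)=\alpha\log(1/q)+O(\alpha^2)$ by integrating $\partial_\alpha d=\log\frac{1-\alpha}{q-\alpha}$, and use a Taylor step with $U'\le\log q$ to run the backward induction $T_j\le U(s_j)+C(K-j)\Delta^2$ through \eqref{eq:Trec} before letting $K\to\infty$, apart from cosmetic differences in how you track constants.
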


\begin{proof}
For an integer $K$, put $\varepsilon=(b-a)/K$,
$s_j=a+j\varepsilon$, and $q_j=q(s_j)$ for $0\le j<K$.
Since $q$ is bounded away from zero and $b<1$, the conditions
$\alpha_j=\varepsilon/(1-s_j)<q_j<1/2$ hold for all sufficiently
large $K$.

Uniformly on the compact range of these parameters,
\begin{equation}\label{eq:small-projection}
d(\alpha,q)=\alpha\log(1/q)+O(\alpha^2).
\end{equation}
For example, this follows by integrating
\[
\frac{\partial}{\partial\alpha}d(\alpha,q)
=\log\frac{1-\alpha}{q-\alpha},
\]
whose value at $\alpha=0$ is $\log(1/q)$ and whose derivative is
bounded on the relevant compact set. Taylor's theorem and
\eqref{eq:potential-conditions} consequently give a constant $C>0$, independent
of $K$ and $j$, such that
\begin{align*}
U(s_{j+1})+(1-s_j)d(\alpha_j,q_j)
&\le U(s_j)+\varepsilon\bigl(U'(s_j)-\log q(s_j)\bigr)
  +C\varepsilon^2\\
&\le U(s_j)+C\varepsilon^2.
\end{align*}
The terminal condition and backward induction in \eqref{eq:Trec} imply
\[
T_j\le U(s_j)+C(K-j)\varepsilon^2.
\]
Indeed, the first branch of that recursion is exactly $U(s_j)$, and
the second branch obeys the preceding estimate.
Theorem~\ref{thm:asymptotic} therefore yields
\[
\Cs\le h(a)-\tfrac12h(2a)+U(a)+\frac{C(b-a)^2}{K}.
\]
This holds for every sufficiently large fixed $K$, after taking the
blocklength limit. Letting $K\to\infty$ proves the claim.
\end{proof}

\section{An explicit trajectory and a certified upper bound}\label{sec:explicit}

\subsection{Intuition: balancing stopping and further projection}

The trajectory balances the two alternatives in the recursion at each
stage. Suppose a fraction $s$ of the coordinates has been fixed, and
choose a disagreement threshold $q(s)$. If the remaining disagreement
is small, the conditional entropy estimate gives the normalized bound
\[
U(s):=s+(1-s)J(q(s)).
\]
If the disagreement is large, we can project further, from $s$ to
$s+\delta$. By \eqref{eq:small-projection}, the normalized logarithmic
cost of this extraction is
\[
(1-s)d\!\left(\frac{\delta}{1-s},q(s)\right)
=\delta\log\frac1{q(s)}+O(\delta^2).
\]
Thus, if the later section admits the bound $U(s+\delta)$, continuing
with another projection gives
\[
U(s+\delta)+\delta\log\frac1{q(s)}+O(\delta^2).
\]
To make stopping with the entropy estimate and continuing with the
projection equally strong to first order, we impose
\[
U(s)=U(s+\delta)+\delta\log\frac1{q(s)}+O(\delta^2).
\]
Taylor expansion yields the balance equation
\[
U'(s)=\log q(s).
\]
This explains the equality case of the differential inequality in
Theorem~\ref{thm:continuous2}. Substituting
$U(s)=s+(1-s)J(q(s))$ gives
\[
q'(s)=-\frac{1-J(q(s))-\log q(s)}{(1-s)J'(q(s))}.
\]
On the interval used below, the numerator and denominator are positive,
so the threshold decreases as more coordinates are fixed.

At the terminal point, we make the entropy bound meet the elementary
output-counting bound:
\[
b+(1-b)J(q(b))=(1-b)\log3.
\]
We choose $q(b)=1/3$. This is the disagreement probability associated
with a uniform ternary output: disagreement is exactly the event that
the output equals one. It is also the maximizer of
$L(q)=h(q)+1-q$, the entropy bound that uses only the disagreement
probability. The terminal equality then gives
\[
\frac{b}{1-b}=\log3-J(1/3).
\]
This motivates the endpoint without asserting that the actual section
has a uniform output distribution.

Finally, we follow the balanced trajectory backward and choose a
starting point. Starting later increases the initial extraction cost
$h(s)-\tfrac12h(2s)$, but decreases the remaining-section bound $U(s)$.
Balancing these effects means minimizing
\[
\Phi(s):=h(s)-\tfrac12h(2s)+U(s).
\]
As derived below, an interior stationary point satisfies
$q(s)=(1/2-s)/(1-s)$. These observations motivate the construction;
they do not establish global optimality among all trajectories or
converse methods. The rigorous bound follows from
Theorem~\ref{thm:continuous} and the explicit verification below.

\subsection{Construction of the balanced trajectory}

We now construct a trajectory satisfying both conditions in
\eqref{eq:potential-conditions} with equality. Write
\begin{align}
	G(v)&:=1-J(v)-\log v,\label{eq:G}\\
	c&:=\log3-J(1/3),\qquad b:=\frac{c}{1+c},\label{eq:terminal}\\
	\Psi(r)&:=\int_{1/3}^{r}\frac{J'(v)}{G(v)}\,dv,
	\qquad
	\sigma(r):=1-\frac{\exp(\Psi(r))}{1+c}.
	\label{eq:trajectory}
\end{align}
The exponential in \eqref{eq:trajectory} is the natural exponential;
the logarithm in $G$ remains base two. On $[1/3,1/2)$,
$J'>0$ and $G\ge1/2$, since $J\le3/2$ and $\log v\le-1$.
Also $c>0$: at disagreement $1/3$, the independent-bit entropy maximizer
from Lemma~\ref{lem:bits} is not the uniform ternary distribution.
Thus $\sigma$ is smooth and strictly decreasing, and
$\sigma(1/3)=b$.

Choose the single rational parameter
\begin{equation}\label{eq:rational-parameter}
	r_*:=\frac{23750713}{50000000}=0.47501426,
	\qquad a:=\sigma(r_*).
\end{equation}
The certified enclosures below imply $0<a<b<1/2$. Define $q(s)$ for
$a\le s\le b$ to be the inverse of $\sigma$ restricted to
$[1/3,r_*]$. In particular, $q(a)=r_*$ and $q(b)=1/3$.

\begin{proposition}[Explicit continuous bound]\label{prop:explicit}
For $a$ specified by \eqref{eq:trajectory}--\eqref{eq:rational-parameter},
\begin{equation}\label{eq:explicit-bound}
\Cs\le \mathcal B
:=h(a)-\frac12h(2a)+a+(1-a)J(r_*).
\end{equation}
\end{proposition}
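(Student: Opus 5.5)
The plan is to apply Theorem~\ref{thm:continuous2} to the triple $(a,b,q)$ from \eqref{eq:terminal}--\eqref{eq:rational-parameter}. Its conclusion \eqref{eq:continuous-bound}, with $q(a)=r_*$, is exactly \eqref{eq:explicit-bound}. So it suffices to check the hypotheses: $0<a<b<1$, $a<1/2$, $q\in C^2([a,b])$ with values in $(0,1/2)$, and the two conditions in \eqref{eq:potential-conditions}. In fact I will show that both conditions hold with equality.

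\emph{Regularity and range.} On $[1/3,r_*]\subset[1/3,1/2)$, the function $J$ is real-analytic: $\sqrt{1-2v}$ is smooth away from $v=1/2$, and the argument $t(v)=(1-\sqrt{1-2v})/2$ lies in $(0,1/2)$, where $h$ is smooth. Hence $J'$ and $G$ are smooth there. As noted after \eqref{eq:trajectory}, we also have $G\ge 1/2>0$ and $J'>0$. Therefore $\Psi$ is smooth and strictly increasing, and $\sigma$ is smooth and strictly decreasing with
\[
\sigma'(r)=-\frac{\exp(\Psi(r))}{1+c}\cdot\frac{J'(r)}{G(r)}<0 .
\]
By the inverse function theorem, $q=\sigma^{-1}$ is $C^\infty$ on $[a,b]=[\sigma(r_*),\sigma(1/3)]$, takes values in $[1/3,r_*]\subset(0,1/2)$, and satisfies $q(a)=r_*$ and $q(b)=1/3$. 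The inequalities $0<a<b<1/2$ come from the certified enclosures the paper promises, i.e.\ a rigorous interval evaluation of $c$ and of $\Psi(r_*)$. I would treat them as a separate numerical lemma. Positivity of $a$ amounts to $\exp(\Psi(r_*))<1+c$, and $b<1/2$ amounts to $c<1$.

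\emph{Terminal condition.} Compute $U(b)=b+(1-b)J(1/3)$. Since $b/(1-b)=c=\log3-J(1/3)$, this gives $U(b)=(1-b)\bigl(c+J(1/3)\bigr)=(1-b)\log3$, so the first condition holds with equality.

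\emph{Differential condition.} Here $1-s=\exp(\Psi(q(s)))/(1+c)$, so
\[
-\frac{d}{ds}\ln(1-s)=\Psi'(q)\,q'=\frac{J'(q)\,q'}{G(q)},
\]
which gives $(1-s)J'(q)q'=-G(q)$. Then
\[
U'(s)=1-J(q)+(1-s)J'(q)q'=1-J(q)-G(q)=\log q(s),
\]
by the definition \eqref{eq:G}. So the second condition holds with equality.

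The main obstacle is the numerical certification. One must enclose $\Psi(r_*)$, an integral whose integrand involves $J'$, together with $c$, tightly enough to confirm $0<a<b<1/2$ and to evaluate $\mathcal B$. I would do this with interval arithmetic on a fine subdivision of $[1/3,r_*]$, bounding the integrand by monotone enclosures on each subinterval. This part is computational; the analytic part is the identity verification above.
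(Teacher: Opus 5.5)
Your proposal is correct and follows the paper's proof: you verify both hypotheses of Theorem~\ref{thm:continuous2} with equality, using $(1-s)J'(q(s))q'(s)=-G(q(s))$ and $b=(1-b)c$, and defer $0<a<b<1/2$ to the certified enclosures exactly as the paper does; your inverse-function argument for $q\in C^2$ is a useful addition the paper leaves implicit. One small sign slip in the middle step: differentiating $\ln(1-s)=\Psi(q(s))-\ln(1+c)$ gives $\frac{d}{ds}\ln(1-s)=\Psi'(q)q'$, i.e.\ $-\frac{1}{1-s}=\frac{J'(q)q'}{G(q)}$, not $-\frac{d}{ds}\ln(1-s)=\Psi'(q)q'$ as you displayed, although the conclusion $(1-s)J'(q)q'=-G(q)$ that you draw from it is the correct one.
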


\begin{proof}
Differentiating \eqref{eq:trajectory} gives
\[
\sigma'(r)=-(1-\sigma(r))\frac{J'(r)}{G(r)},\qquad
q'(s)=-\frac{G(q(s))}{(1-s)J'(q(s))}.
\]
For $U(s)=s+(1-s)J(q(s))$ it follows that
\begin{align}
U'(s)
&=1-J(q(s))+(1-s)J'(q(s))q'(s)\\
&=1-J(q(s))-G(q(s))=\log q(s). \label{eq:U-derivative}
\end{align}
Furthermore, $b=(1-b)c$, so
\[
U(b)=b+(1-b)J(1/3)=(1-b)\log3.
\]
Theorem~\ref{thm:continuous2} applies.
\end{proof}


All decimal endpoints below are exact rational numbers. We use  
interval arithmetic to establish
\begin{align}
0.30083655751150239577
&<b<0.30083655751150239578,\label{eq:b-enclosure}\\
0.3091077386432610
&<\Psi(r_*)<0.3091077386432632,\label{eq:psi-enclosure}\\
0.0475931786634061
&<a<0.0475931786634082,\label{eq:a-enclosure}\\
1.4800634255387845
&<\mathcal B<1.4800634255388169.\label{eq:B-enclosure}
\end{align}
In particular,
\begin{equation}\label{eq:main-result}
\Cs<1.480063425539<1.480063426.
\end{equation}

\begin{remark}[Choice of the rational parameter]
The proof checks the explicit rational $r_*$ and does not assume an
optimization succeeds. To motivate this choice, differentiate the bound
along the constructed trajectory (using \eqref{eq:U-derivative}):
\[
\frac{d}{ds}\left(h(s)-\tfrac12h(2s)+U(s)\right)
=\log\frac{q(s)(1-s)}{1/2-s}.
\]
Thus a stationary starting point satisfies
$q(s)=(1/2-s)/(1-s)$. The rational parameter in
\eqref{eq:rational-parameter} is close to such a point. No assertion
of global optimality among all possible converses is needed or made.
\end{remark}

\section{Achievability by weighted complement gluing}\label{sec:achievability}

The complement-gluing construction of Balogh, Nguyen, \"Osterg\aa rd,
White, and Wigal~\cite{BNOWW} gives a two-user sum rate of
$1.318446971988\ldots$ using ordinary Hamming weight. We replace that
statistic by a positive integer weighted sum of coordinates. Its additivity
still separates the two output families, while the different coordinate
weights give a larger retained message product.

\subsection{A weighted gluing lemma}

Let $A_0,B_0\subseteq\{0,1\}^d$ be a zero-error code pair, and let
$\mathbf{w}=(w_0,\ldots,w_{d-1})\in\mathbb Z_{\ge0}^d$ have positive sum $W$.
For $\mathbf{x}=(x^{(1)},\ldots,x^{(m)})\in\{0,1\}^{dm}$, define
\[
S_\mathbf{w}(\mathbf{x}):=\sum_{r=1}^m\sum_{j=0}^{d-1}w_j x^{(r)}_j,
\qquad D:=mW.
\]
Use the same linear formula on ternary outputs. Then
$S_\mathbf{w}(\mathbf{x}+\mathbf{y})=S_\mathbf{w}(\mathbf{x})+S_\mathbf{w}(\mathbf{y})$ and
$S_\mathbf{w}(\one-\mathbf{x})=D-S_\mathbf{w}(\mathbf{x})$.

\begin{lemma}[Weighted complement gluing]\label{lem:weighted-gluing}
Fix an integer $\ell$ with $0<2\ell\le D$, and put
\begin{align}
A_-&:=\{\mathbf{x}\in A_0^m:S_\mathbf{w}(\mathbf{x})\le\ell-1\},\nonumber\\
A_+&:=\{\one-\mathbf{x}:\mathbf{x}\in A_0^m,\ S_\mathbf{w}(\mathbf{x})\le\ell\},\nonumber\\
\widetilde A&:=A_-\cup A_+,\qquad
\widetilde B:=\{\mathbf{y}\in B_0^m:\ell\le S_\mathbf{w}(\mathbf{y})\le D-\ell\}.
\label{eq:weighted-code}
\end{align}
If both resulting codebooks are nonempty, $(\widetilde A,\widetilde B)$
is a length-$dm$ zero-error code pair, and the union defining
$\widetilde A$ is disjoint.
\end{lemma}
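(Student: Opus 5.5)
The plan is to separate the two families of outputs by the value of the additive score $S_\mathbf{w}$. After that, within each family, unique decodability follows from the tensor-power structure of $(A_0^m,B_0^m)$ together with complementation.

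\emph{Disjointness.} Any $\mathbf{x}\in A_-$ has $S_\mathbf{w}(\mathbf{x})\le\ell-1$. Any $\one-\mathbf{x}'\in A_+$ has $S_\mathbf{w}(\one-\mathbf{x}')=D-S_\mathbf{w}(\mathbf{x}')\ge D-\ell\ge\ell$, since $2\ell\le D$. So the two scores differ and the union is disjoint.

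\emph{Outputs from different parts are distinct.} For $\mathbf{y}\in\widetilde B$ we have $\ell\le S_\mathbf{w}(\mathbf{y})\le D-\ell$, and scores add on outputs. For $\mathbf{x}\in A_-$ this gives $S_\mathbf{w}(\mathbf{x}+\mathbf{y})\le(\ell-1)+(D-\ell)=D-1$. For $\one-\mathbf{x}'\in A_+$ it gives $S_\mathbf{w}(\one-\mathbf{x}'+\mathbf{y})\ge(D-\ell)+\ell=D$. Hence no output of $A_-\times\widetilde B$ coincides with an output of $A_+\times\widetilde B$. This is the step where the asymmetric thresholds $\ell-1$ and $\ell$ matter, and I expect it to be the only delicate point.

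\emph{Within $A_-\times\widetilde B$.} The pair $(A_0^m,B_0^m)$ is zero-error: by \eqref{eq:diff}, a difference common to $A_0^m-A_0^m$ and $B_0^m-B_0^m$ must vanish blockwise. Since $A_-\subseteq A_0^m$ and $\widetilde B\subseteq B_0^m$, the sum map is injective on $A_-\times\widetilde B$.

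\emph{Within $A_+\times\widetilde B$.} Suppose $(\one-\mathbf{x})+\mathbf{y}=(\one-\mathbf{x}')+\mathbf{y}'$. Then $\mathbf{x}-\mathbf{x}'=\mathbf{y}-\mathbf{y}'$, which lies in $(A_0^m-A_0^m)\cap(B_0^m-B_0^m)=\{\mathbf{0}\}$. So $\mathbf{x}=\mathbf{x}'$ and $\mathbf{y}=\mathbf{y}'$. Equivalently, complementing the first codebook preserves the zero-error property, as in the proof of Lemma~\ref{lem:initial}.

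Combining the three cases, the map $(\mathbf{a},\mathbf{b})\mapsto\mathbf{a}+\mathbf{b}$ is injective on $\widetilde A\times\widetilde B$. Both codebooks lie in $\{0,1\}^{dm}$, so $(\widetilde A,\widetilde B)$ is a length-$dm$ zero-error code pair whenever both are nonempty.
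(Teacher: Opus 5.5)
Your proof is correct and matches the paper's argument essentially step for step: the output families are separated by score, $S_\mathbf{w}(\mathbf{a}_-+\mathbf{b})\le D-1<D\le S_\mathbf{w}(\mathbf{a}_++\mathbf{b})$; injectivity within each family comes from the tensor-power pair together with invariance of the difference set under complementation; and $A_-$, $A_+$ are disjoint because $\ell-1<D-\ell$. The only difference is that you spell out why $(A_0^m,B_0^m)$ is zero-error (a common difference must vanish blockwise), which the paper simply asserts.
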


\begin{proof}
Tensor products preserve unique decodability. Complementing the entire
first constituent also preserves it, since
\[
((\one-A_0^m)-(\one-A_0^m))
=-(A_0^m-A_0^m)=A_0^m-A_0^m.
\]
Thus $(A_-,\widetilde B)$ and $(A_+,\widetilde B)$ are individually
zero-error code pairs. For their outputs, respectively, one has
\begin{align*}
S_\mathbf{w}(\mathbf{a}_-+\mathbf{b})&\le(\ell-1)+(D-\ell)=D-1,\\
S_\mathbf{w}(\mathbf{a}_++\mathbf{b})&\ge(D-\ell)+\ell=D.
\end{align*}
Consequently the two output sets are disjoint, proving unique decodability
of their union. The two input families are also disjoint because
$S_\mathbf{w}(\mathbf{a}_-)\le\ell-1<D-\ell\le S_\mathbf{w}(\mathbf{a}_+)$.
\end{proof}

The receiver first compares the weighted output score with $D$ to identify
the family, and then uses the unique decoder for that family. Each encoder
chooses its word from its own fixed codebook, so this construction uses
neither feedback nor a shared message.

\subsection{An explicit length-672 construction}

Use the six-coordinate seed listed in \cite[Section 4.1]{BNOWW}:
\begin{align}
A_0&=\{3,4,7,10,14,17,21,27,32,36,42,49,56,59,60\},\nonumber\\
B_0&=\{8,9,16,18,24,29,30,31,32,33,34,39,45,47,54,55\}.
\label{eq:achievable-seed}
\end{align}
An integer label $x$ represents the vector $(x_0,\ldots,x_5)$ determined
by $x=\sum_{j=0}^5 2^j x_j$; thus coordinate zero is the least significant
bit. Integer labels are only notation: channel addition is coordinatewise,
without carries. Direct enumeration verifies that the $15\cdot16=240$
ternary sums are all distinct.

Choose
\begin{equation}\label{eq:achievable-parameters}
 w=(2,2,9,3,3,4),\qquad m=112,\qquad \ell=1188.
\end{equation}
Then $W=23$, $D=2576$, and $dm=672$. Thus $\widetilde A$ consists of
the words in $A_0^{112}$ of score at most 1187, together with the
complements of the words of score at most 1188; $\widetilde B$ consists
of the words in $B_0^{112}$ with scores between 1188 and 1388 inclusive.
The two output families have scores at most 2575 and at least 2576,
respectively.

To count these codebooks exactly, define the seed score polynomials
\begin{align}
P_A(z)&:=\sum_{\mathbf{x}\in A_0}z^{S_\mathbf{w}(\mathbf{x})}
=2z^4+2z^5+3z^9+2z^{10}+2z^{13}+3z^{14}+z^{19},\nonumber\\
P_B(z)&:=\sum_{\mathbf{y}\in B_0}z^{S_\mathbf{w}(\mathbf{y})}
=2z^3+z^4+2z^5+3z^6+3z^{17}+2z^{18}+z^{19}+2z^{20}.
\label{eq:score-polynomials}
\end{align}
Writing $a_k=[z^k]P_A(z)^{112}$ (the coefficient of $z^k$ in $P_A(z)^{112}$) and $b_k=[z^k]P_B(z)^{112}$ gives
\begin{equation}\label{eq:achievable-counts}
N_A:=|\widetilde A|=2\sum_{k=0}^{1187}a_k+a_{1188},\qquad
N_B:=|\widetilde B|=\sum_{k=1188}^{1388}b_k.
\end{equation}

\begin{theorem}[Certified achievable rate]\label{thm:achievable}
The construction \eqref{eq:weighted-code} with
\eqref{eq:achievable-seed}--\eqref{eq:achievable-parameters} satisfies
\begin{equation}\label{eq:achievable-bound}
 \Cs\ge R_*:=\frac{\log(N_A N_B)}{672}>1.318639029203.
\end{equation}
More precisely, the exact code rate is enclosed by
\begin{equation}\label{eq:achievable-enclosure}
1.31863902920327531492028443956481
<R_*<
1.31863902920327531492028443956483.
\end{equation}
\end{theorem}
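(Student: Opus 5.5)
The proof has two parts. First we show that the construction is a valid zero-error code of length $672$. Then we compute $N_AN_B$ exactly and bound its logarithm rigorously.

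\textbf{Validity and counts.} We verify by direct enumeration that $(A_0,B_0)$ in \eqref{eq:achievable-seed} is uniquely decodable, i.e.\ that the $240$ coordinatewise ternary sums are distinct. Equivalently, by \eqref{eq:diff}, $(A_0-A_0)\cap(B_0-B_0)=\{\mathbf 0\}$. With $w=(2,2,9,3,3,4)$ we have $W=23$ and $D=112\cdot 23=2576$. Since $0<2\ell=2376\le D$, Lemma~\ref{lem:weighted-gluing} applies once both codebooks are nonempty. Nonemptiness follows from the counts below: the minimum $B$-score is $3\cdot112=336$ and the maximum is $20\cdot 112=2240$, so the window $[1188,1388]$ is reachable, and $A$-scores start at $448$. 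The lemma then gives a length-$672$ zero-error pair, and $\Cs\ge R_*$ follows from \eqref{eq:capacity}, since $M(672)\ge N_AN_B$.

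Next we justify \eqref{eq:score-polynomials} by computing $S_\mathbf{w}$ of each seed word from its binary label. For example, $3=(1,1,0,0,0,0)$ has score $4$. We tally the exponents and check that $P_A(1)=15$ and $P_B(1)=16$. Because $S_\mathbf{w}$ is additive over blocks, the number of words in $A_0^{112}$ of score $k$ is $a_k=[z^k]P_A(z)^{112}$, and similarly for $b_k$. Complementation is a bijection, and the union defining $\widetilde A$ is disjoint by the lemma. This gives \eqref{eq:achievable-counts}.

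\textbf{Exact evaluation.} Each coefficient $a_k$ and $b_k$ is an integer, so we compute them with exact big-integer arithmetic. The polynomial $P_A^{112}$ has degree $2128$ and $P_B^{112}$ has degree $2240$. We obtain these powers by repeated squaring, or equivalently by $112$ successive convolutions, which is cheap at this size. Summing the relevant coefficients gives $N_A$ and $N_B$ as exact integers, each with a few hundred decimal digits.

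\textbf{Certified logarithm.} We then enclose $\log_2(N_AN_B)/672$ rigorously. To do this, we write $N_AN_B=2^e\cdot x$ with $x\in[1,2)$ a rational number represented exactly. We then bound $\log_2 x=\ln x/\ln 2$ using interval arithmetic at high precision, for instance with an atanh series whose truncation error is bounded explicitly, together with certified enclosures of $\ln 2$. This produces \eqref{eq:achievable-enclosure}, which implies $R_*>1.318639029203$.

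The main obstacle is not conceptual but lies in making the computation trustworthy. The exact coefficient extraction must be done in integers with no floating-point rounding. In addition, the logarithm enclosure must be genuinely rigorous to about $30$ digits, which means outward-rounded intervals and explicit series remainders. We would cross-check the counts independently, for example by recomputing $N_A$ and $N_B$ through a dynamic-programming count over the blocks that tracks the score distribution.
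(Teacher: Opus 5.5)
Your proposal is correct and follows the paper's own proof. Both use the seed check with Lemma~\ref{lem:weighted-gluing} for unique decodability, exact big-integer extraction of the coefficients of $P_A^{112}$ and $P_B^{112}$ to get $N_A$ and $N_B$, and an outward-rounded logarithm evaluation for the enclosure.

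One small point: the minimum and maximum $B$-scores alone do not show that the window $[1188,1388]$ is hit. The exact count $N_B>0$ settles this, as does the observation that using $k$ high blocks realizes every score in $[336+14k,\,672+14k]$.
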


\begin{proof}
The seed check and Lemma~\ref{lem:weighted-gluing} establish unique
decodability, and \eqref{eq:achievable-counts} gives the exact message
counts. Concatenating this fixed code with itself makes $R_*$ achievable
asymptotically; equivalently use \eqref{eq:capacity}. Exact integer
coefficient computation followed by outward-rounded logarithm evaluation
gives \eqref{eq:achievable-enclosure}.
\end{proof}

\subsection{Comparison with unweighted gluing}

For comparison, the unweighted
parameters $w=(1,1,1,1,1,1)$, $m=142$, $\ell=402$, give the construction in \cite{BNOWW}, whose rate satisfies
\[
1.31844697198813872162<R_{\rm unweighted}
<1.31844697198813872164.
\]
The improvement is approximately $0.0001920572151$ bits per channel use,
with strict improvement certified by disjoint rate intervals.

The seed's coordinate-one counts are $(7,7,6,7,7,7)$ for $A_0$ and
$(8,8,8,8,8,8)$ for $B_0$. This motivates using unequal weights to exploit
the different coordinate imbalances. The specific weights and cutoff
were chosen numerically; the proof requires only the fixed values in
\eqref{eq:achievable-parameters}. No optimality assertion about the seed,
weights, or gluing construction is made.

\section*{Acknowledgements }

\emph{Generative-AI use disclosure:}  ChatGPT 5.6 Sol assisted with mathematical arguments and code constructions,
 proof drafting, and manuscript
 preparation. 
 The author independently verified all arguments and mathematical claims.  The author retain responsibility
 for the correctness, originality, and attribution of all content.
 
\emph{Funding:} This work was supported by the National Key Research
and Development Program of China under grant 2023YFA1009604 and the
NSFC under grant 62101286.

\bibliographystyle{unsrt}
\bibliography{ref}

\begin{thebibliography}{10}

\bibitem{OS}
Ofer Ordentlich and Ofer Shayevitz.
\newblock An upper bound on the sizes of multiset-union-free families.
\newblock {\em SIAM Journal on Discrete Mathematics}, 30(2):1032--1045, 2016.

\bibitem{Lindstrom}
Bengt Lindstr{\"o}m.
\newblock Determination of two vectors from the sum.
\newblock {\em Journal of Combinatorial Theory}, 6(4):402--407, 1969.

\bibitem{MO}
Michael Mattas and Patric R.~J. {\"O}sterg{\aa}rd.
\newblock A new bound for the zero-error capacity region of the two-user binary
  adder channel.
\newblock {\em IEEE Transactions on Information Theory}, 51(9):3289--3291,
  2005.

\bibitem{BNOWW}
J{\'a}nos Balogh, Tuan Nguyen, Patric R.~J. {\"O}sterg{\aa}rd, Edward~P. White,
  and Michael~C. Wigal.
\newblock Improving uniquely decodable codes in binary adder channels.
\newblock {\em IEEE Transactions on Information Theory}, 71(5):3027--3038, May
  2025.

\bibitem{AKKN}
Per Austrin, Petteri Kaski, Mikko Koivisto, and Jesper Nederlof.
\newblock Sharper upper bounds for unbalanced uniquely decodable code pairs.
\newblock {\em IEEE Transactions on Information Theory}, 64(2):1368--1373,
  2018.

\bibitem{OPS}
Ofer Ordentlich, Yury Polyanskiy, and Or~Shayevitz.
\newblock A note on the probability of rectangles for correlated binary
  strings.
\newblock {\em IEEE Transactions on Information Theory}, 66(12):7878--7886,
  December 2020.

\bibitem{YAC}
Lei Yu, Venkat Anantharam, and Jun Chen.
\newblock Graphs of joint types, noninteractive simulation, and stronger
  hypercontractivity.
\newblock {\em IEEE Transactions on Information Theory}, 70(4):2287--2308,
  2024.

\bibitem{Zhang}
Ru~Zhang.
\newblock Improved upper bound for lindstr{\"o}m's unique-sum problem.
\newblock {\em arXiv preprint arXiv:2608.05762}, 2026.

\bibitem{KLWY}
Toshio Kasami, Shu Lin, Victor~K. Wei, and Shigeru Yamamura.
\newblock Graph theoretic approaches to the code construction for the two-user
  multiple-access binary adder channel.
\newblock {\em IEEE Transactions on Information Theory}, 29(1):114--130, 1983.

\bibitem{Wiman}
Mikael Wiman.
\newblock Improved constructions of unbalanced uniquely decodable code pairs.
\newblock Master's thesis, KTH Royal Institute of Technology, 2017.

\bibitem{Wyner}
Aaron~D. Wyner.
\newblock The common information of two dependent random variables.
\newblock {\em IEEE Transactions on Information Theory}, 21(2):163--179, March
  1975.

\end{thebibliography}

\end{document}